\setlist{listparindent=0pt,parsep=3pt}
\newcommand{\TitleWithUrl}[1]{\IfEmptyBibField{doi}%
  {\IfEmptyBibField{url}{\textit{#1}}%
    {\IfEmptyBibField{eprint}{\href {\BibField{url}}{\textit{#1}}}{\textit{#1}}}%
    }%
  {\href {https://doi.org/\BibField{doi}}{\textit{#1}}}}
\renewcommand{\eprint}[1]{\IfEmptyBibField{url}{\url{#1}}%
  {\href {\BibField{url}}{#1}}}
\newtheorem{theorem}{Theorem}[section]
\newtheorem{proposition}[theorem]{Proposition}
\newtheorem{fact}[theorem]{Fact}
\theoremstyle{definition}
\newtheorem{definition}[theorem]{Definition}
\theoremstyle{remark}
\newtheorem{remark}[theorem]{Remark}
\numberwithin{equation}{section}
\newcommand{\cratio}{\mathrm{cr}}
\definecolor{darkblue}{rgb}{0,0,0.7}
\newsavebox\mybox
\title{Discrete mKdV equation via Darboux transformation}
\author{Joseph Cho}
\address[Joseph Cho]{Institute of Discrete Mathematics and Geometry, TU Wien, Wiedner Hauptstrasse 8-10/104, 1040 Wien, Austria}
\email{jcho@geometrie.tuwien.ac.at}
\author{Wayne Rossman}
\address[Wayne Rossman]{Department of Mathematics, Graduate School of Science, Kobe University, 1-1 Rokkodai-cho, Nada-ku, Kobe 657-8501, Japan}
\email{wayne@math.kobe-u.ac.jp}
\author{Tomoya Seno}
\address[Tomoya Seno]{Department of Mathematics, Graduate School of Science, Kobe University, 1-1 Rokkodai-cho, Nada-ku, Kobe 657-8501, Japan}
\email{tseno@math.kobe-u.ac.jp}
\date{}
\subjclass[2020]{Primary 53A70; Secondary 35Q53, 53A04}
\begin{document}

\begin{abstract}
	We introduce an efficient route to obtaining the discrete potential mKdV equation emerging from a particular discrete motion of discrete planar curves.
\end{abstract}

\maketitle

\section{Introduction}

In this work, we illuminate the relationship between the discrete and semi-discrete potential modified Korteweg-de Vries (mKdV) equations on the one hand, and on the other, the transformation theory of the smooth potential mKdV equation via the use of infinitesimal Bianchi cubes of Darboux transformations keeping arc-length polarization.

The works \cites{matsuura_discrete_2012-1, inoguchi_motion_2012} used the discrete frames of discrete motions of a discrete curve, to produce the discrete potential mKdV equation as the compatibility condition; rather, \cites{kaji_linkage_2019, inoguchi_explicit_2012} considered the analogous result for continuous motions to obtain the semi-discrete potential mKdV equation.
However, the discrete and semi-discrete potential mKdV equations appear together in the context of Bäcklund transformations and permutability of the smooth potential mKdV equation \cite{wadati_backlund_1974}.
This suggests that a suitable definition of a 3-dimensional integrable lattice with two discrete parameters and one smooth parameter would allow one to combine both the discrete and continuous motions of a discrete curve, resulting in a unified approach to obtaining the discrete and semi-discrete potential mKdV equations.

An integrable lattice involving one smooth and two discrete parameters appeared in \cite{burstall_semi-discrete_2016}, which investigated Darboux transformations and permutability of smooth polarized curves, leading to the \emph{semi-discrete isothermic surfaces} of \cite{muller_semi-discrete_2013} and their Darboux transformations.
Darboux transformations of smooth polarized curves have been given a new interpretation in \cite{cho_infinitesimal_2020} as \emph{Darboux deformations} of discrete polarized curves, describing a motion of discrete curves that results in the semi-discrete isothermic surfaces. Characterizing the continuous motion of discrete curves in \cite{kaji_linkage_2019, inoguchi_explicit_2012} as Darboux deformations keeping discrete arc-length polarization, this work \cite{cho_infinitesimal_2020} simplified the process of obtaining the semi-discrete mKdV equation by interpreting the potential function as the tangential angle -- the turning angle of the tangent vector measured from a fixed axis --  of the smooth curves in the system.
Thus, the 3-dimensional integrable lattice of Darboux transformations and permutability of smooth polarized curves in \cite{burstall_semi-discrete_2016} presents itself as a suitable choice for unifying the discrete and semi-discrete systems yielding the respective potential mKdV equations.

In this paper, we reinterpret such a 3-dimensional system as the permutability between Darboux transformations and Darboux deformations of a discrete polarized curve, a notion we refer to as \emph{infinitesimal Bianchi cube}.
To do this, we explicitly state the expected definition of Darboux transformations of a discrete polarized curve in Definition~\ref{def:dFlow}, expected since the image of successive Darboux transformations should yield discrete isothermic surfaces as defined in \cite{bobenko_discrete_1996-1}.
Then, we further look at the integrable reduction of this system, by imposing the arc-length polarization on the discrete curves, and show that the Darboux deformations and transformations keeping the arc-length polarizations permute as well (see Proposition \ref{PermOfDarb}).

We then show that such integrable reduction yields a 1-parameter family of solutions to the discrete potential mKdV equation (see Theorem~\ref{thm1}) using the well-known calculations dating back to Bianchi \cite{bianchi_sulla_1892} (see also \cites{wadati_backlund_1974, rogers_backlund_2002}), and make connection to the discrete motion of discrete curves introduced in \cites{matsuura_discrete_2012-1, inoguchi_motion_2012} that also results in the discrete potential mKdV equation.

\section{Preliminaries}
In this section, we recall the definitions and results from \cite{burstall_semi-discrete_2016, cho_infinitesimal_2020} directly related to this paper.
For this, let $s \in I$ where $I \subset \mathbb{R}$ is a smooth interval in the reals, polarized by a non-vanishing quadratic differential $\frac{\dif s^2}{m}$.

\begin{definition}[{\cite[Definition 2.4]{burstall_semi-discrete_2016}}]
	A pair of smooth polarized curves $f_0, f_1 : (I, \frac{\dif s^2}{m}) \to \mathbb{C}$ is called a \emph{Darboux pair} with parameter $\mu$ if
	\[
		\frac{f_0' f_1'}{(f_0 - f_1)^2} = \frac{\mu}{m}.
	\]
	If two curves are a Darboux pair, then we call one curve a \emph{Darboux transformation} of the other.
\end{definition}
Darboux transformations of smooth polarized curves obey Bianchi permutability:
\begin{fact}[{\cite[Theorem 2.8]{burstall_semi-discrete_2016}}]\label{fact:perm}
	Starting with a polarized curve $f_0$, and its Darboux transforms $f_1$ and $f_2$ with parameters $\mu_1$ and $\mu_2$, respectively, one has a fourth curve $f_{12}$ that is simultaneously a Darboux transform of $f_1$ and $f_2$ with parameters $\mu_2$ and $\mu_1$, respectively, where the fourth curve is determined algebraically via the cross-ratio condition
	\begin{equation}\label{eqn:crsmooth}
		\cratio (f_0, f_1, f_{12}, f_2) = \frac{\mu_2}{\mu_1}.
	\end{equation}
\end{fact}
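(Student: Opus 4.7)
The plan is to take the cross-ratio equation \eqref{eqn:crsmooth} as an implicit definition of $f_{12}$, and then to verify directly that the resulting curve satisfies both Darboux pair conditions
\[
	\frac{f_1' \, f_{12}'}{(f_1 - f_{12})^2} = \frac{\mu_2}{m}, \qquad \frac{f_2' \, f_{12}'}{(f_2 - f_{12})^2} = \frac{\mu_1}{m}.
\]

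First I would solve \eqref{eqn:crsmooth} for $f_{12}$, which is possible because the cross-ratio is a Möbius function of $f_{12}$; this yields an explicit rational expression in $f_0, f_1, f_2$ and $\mu_1, \mu_2$. From this formula the differences $f_{12} - f_1$ and $f_{12} - f_2$ can be read off as factored quotients sharing a common denominator, which is the crucial simplification making the subsequent computation tractable.

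Next, rather than differentiating $f_{12}$ directly from the rational expression, I would exploit the fact that $\cratio(f_0, f_1, f_{12}, f_2) = \mu_2/\mu_1$ is constant in $s$: taking the logarithmic derivative of the cross-ratio produces a linear relation among the four derivatives $f_0', f_1', f_2', f_{12}'$, which can be solved algebraically for $f_{12}'$. The Darboux pair hypotheses then allow me to eliminate $f_1'$ and $f_2'$ in favor of $f_0'$ and the differences $f_0 - f_i$, expressing $f_{12}'$ entirely in the data of the hypothesis.

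The remaining task is to substitute these expressions into the two target identities and verify them by direct simplification. The main obstacle is purely computational bookkeeping: after cancellation, one must check that the correct telescoping produces $\mu_2/m$ and $\mu_1/m$ on the right-hand sides, with the common denominator from the Möbius formula for $f_{12}$ cancelling against the difference factors $(f_1 - f_{12})^2$ and $(f_2 - f_{12})^2$. Once one identity is verified, the other follows by the symmetry swap $1 \leftrightarrow 2$, $\mu_1 \leftrightarrow \mu_2$, under which \eqref{eqn:crsmooth} is preserved up to inversion of the cross-ratio, exchanging the two target equations.
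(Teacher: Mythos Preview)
The paper does not supply its own proof of this statement: it is recorded as a \emph{Fact} in the Preliminaries section and attributed verbatim to \cite[Theorem~2.8]{burstall_semi-discrete_2016}. So there is no argument in the present paper to compare your proposal against.

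That said, your plan is a perfectly valid direct verification and is essentially the classical Bianchi-style computation. Solving \eqref{eqn:crsmooth} for $f_{12}$ does give the nice factored differences
\[
	f_{12}-f_1=\frac{\mu_1(f_0-f_1)(f_2-f_1)}{\mu_1(f_0-f_1)+\mu_2(f_2-f_0)},\qquad
	f_{12}-f_2=\frac{\mu_2(f_2-f_0)(f_1-f_2)}{\mu_1(f_0-f_1)+\mu_2(f_2-f_0)},
\]
and differentiating the constancy of the cross-ratio together with the two given Darboux relations $f_i'=\mu_i(f_0-f_i)^2/(m f_0')$ indeed reduces the two target identities to algebraic bookkeeping, with the $1\leftrightarrow 2$ symmetry halving the work. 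One minor point you glossed over: you should note that the common denominator $\mu_1(f_0-f_1)+\mu_2(f_2-f_0)$ is generically nonzero (equivalently, the M\"obius map determining $f_{12}$ is nondegenerate), so that $f_{12}$ is actually a well-defined curve; this is implicit in the cited reference as well. With that caveat your outline is complete.
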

Therefore, successive Darboux transformations of smooth polarized curves result in the semi-discrete isothermic surfaces; the Bianchi permutability of smooth polarized curves then gives the Darboux transformation of semi-discrete isothermic surfaces.
We also recall:
\begin{definition}[{\cite[p.\ 45]{burstall_semi-discrete_2016}}]
	A smooth polarized curve $f_0: (I, \frac{\dif s^2}{m}) \to \mathbb{C}$ is \emph{arc-length polarized} if
	\[
		\frac{\dif s^2}{m} = |{\dif f_0}|^2.
	\]
\end{definition}
It is then readily seen that if an arc-length polarized curve $f_0$ is also arc-length parametrized, then we have $m \equiv 1$.
With the arc-length polarizations available, Darboux transformations keeping the arc-length polarization can be characterized via:
\begin{fact}[{\cite{cho_infinitesimal_2020}}]\label{fact:reduction1}
	For a Darboux pair $f_0, f_1 : (I, \frac{\dif s^2}{m}) \to \mathbb{C}$ with paramter $\mu$, assume that $f_0$ is arc-length polarized.
	Then $f_1$ is also arc-length polarized if and only if $|f_1 - f_0|^2 = \frac{1}{\mu}$ at one point $s_0 \in I$.
\end{fact}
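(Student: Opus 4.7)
The plan is to exploit the arc-length polarization of $f_0$ to rewrite the Darboux condition as a closed expression for $f_1'$, then compute the derivative of $|f_1 - f_0|^2$ and conclude via uniqueness for a linear ODE.

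First, since $|f_0'|^2 = 1/m$ gives $\overline{f_0'} = 1/(m f_0')$, the Darboux condition $f_0' f_1' = \mu (f_0 - f_1)^2/m$ rewrites as $f_1' = \mu(f_0 - f_1)^2 \overline{f_0'}$, which is the essential input. Setting $g := f_1 - f_0$, so that $g' = \mu g^2 \overline{f_0'} - f_0'$, I would compute $\tfrac{\dif}{\dif s}|g|^2 = g' \bar g + g \overline{g'}$ and group the $\mu g^2$-terms against the $-f_0'$-terms. The anticipated outcome is the clean factorization
\[
\tfrac{\dif}{\dif s}|g|^2 \;=\; 2\,\mathrm{Re}(\bar g\, f_0') \cdot (\mu|g|^2 - 1),
\]
which is the heart of the argument.

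From this factorization, $u := \mu|g|^2 - 1$ satisfies the homogeneous linear ODE $u' = 2\mu\,\mathrm{Re}(\bar g\, f_0') \cdot u$, whose coefficient is smooth along the given curves. By uniqueness, $u \equiv 0$ on $I$ as soon as $u(s_0) = 0$, so $|f_1 - f_0|^2(s_0) = 1/\mu$ automatically upgrades to $|f_1 - f_0|^2 \equiv 1/\mu$ on all of $I$. To match this equivalent pointwise condition with the arc-length polarization of $f_1$: if $|g|^2 \equiv 1/\mu$, then $|f_1'|^2 = \mu^2|g|^4|f_0'|^2 = 1/m$, so $f_1$ is arc-length polarized; conversely, arc-length polarization of $f_1$ applied to the same formula $f_1' = \mu g^2\overline{f_0'}$ forces $\mu^2|g|^4 = 1$, hence $|g|^2 = 1/\mu$ pointwise, in particular at $s_0$.

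The main obstacle is really just spotting the factorization in the second paragraph; once $\tfrac{\dif}{\dif s}|g|^2$ is seen to be proportional to $\mu|g|^2 - 1$, conservation of $\mu|g|^2 - 1$ along $I$ is immediate from standard ODE theory, and the equivalence with arc-length polarization of $f_1$ reduces to a one-line algebraic check. Division by $f_0'$ and by $f_0 - f_1$ along the way is legitimate since the Darboux pair condition implicitly requires $f_0 \neq f_1$ and $f_0' \neq 0$.
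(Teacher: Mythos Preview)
Your argument is correct: expanding $g'\bar g + g\overline{g'}$ with $g' = \mu g^2\overline{f_0'} - f_0'$ gives $\mu|g|^2\cdot 2\,\mathrm{Re}(g\overline{f_0'}) - 2\,\mathrm{Re}(\bar g f_0')$, which is exactly the factorization you claim, and the ODE and algebraic steps that follow are routine. One minor point: from $\mu^2|g|^4 = 1$ you conclude $|g|^2 = 1/\mu$; strictly this gives $|g|^2 = 1/|\mu|$, but positivity of $\mu$ is already implicit in the statement of the Fact.

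Note, however, that the paper does not supply its own proof of this statement---it is recorded as a Fact cited from \cite{cho_infinitesimal_2020}---so there is no in-paper argument to compare against directly. The closest analogue is the paper's proof of Proposition~\ref{prop:discBy}, the discrete counterpart, which proceeds by solving the cross-ratio condition for $\hat x_j - \hat x_i$ and then checking an algebraic identity of moduli to propagate the length condition across one edge. Your proof is the natural continuous version of that: where the paper propagates $|x_i - \hat x_i|^2 = 1/\hat\mu$ edge-by-edge via an algebraic identity, you propagate $|f_1 - f_0|^2 = 1/\mu$ along $I$ via a differential conservation law. The two arguments are morally the same, with the linear ODE in your proof playing the role of the edge-by-edge induction in the discrete case.
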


The system of semi-discrete isothermic surfaces, with one smooth parameter representing the parametrization of the curves and one discrete parameter describing the transformations, was given a new interpretation in \cite{cho_infinitesimal_2020} as \emph{Darboux deformations} of \emph{discrete polarized curves}, which we now recall here.

Given a discrete interval $\Sigma \subset \mathbb{Z}$, and a strictly positive or negative function $\mu$ on (unoriented) edges of $\Sigma$, a \emph{discrete polarized curve} is a discrete curve $x:(\Sigma,\tfrac{1}{\mu}) \to \mathbb{C}$ defined on a discrete polarized domain $(\Sigma,\tfrac{1}{\mu})$ whose polarization is given by $\mu$.

\begin{remark}
	The discrete polarization $\frac{1}{\mu}$ is a straight discretization of the non-vanishing quadratic differential $\frac{\dif s^2}{m}$ in the smooth case; hence, the discrete polarization is assumed to be strictly positive or negative. This is akin to the a priori assumption of the umbilic-free condition when studying the local properties of isothermic surfaces.
\end{remark}

\begin{definition}[{\cite{cho_infinitesimal_2020}}]
	A continuous motion of discrete curves $f : \Sigma \times I \to \mathbb{C}$ is called an \emph{infinitesimal Darboux transformation}, or a \emph{Darboux deformation}, with parameter function $m$ of a discrete polarized curve $x:(\Sigma,\tfrac{1}{\mu}) \to \mathbb{C}$ if  $f(s_0) = x$ for some $s_0 \in I$, and on every edge $(ij)$,
	\[
		\frac{f'_i f'_j}{(f_i - f_j)^2} = \frac{\mu_{ij}}{m}.
	\]
\end{definition}
It is readily seen that $f$ is, in fact, a semi-discrete isothermic surface.
One can also consider a discrete arc-length polarization:
\begin{definition}[{\cite{cho_infinitesimal_2020}}]
	A discrete curve $x :(\Sigma,\tfrac{1}{\mu}) \to \mathbb{C}$ is arc-length polarized if, on every edge $(ij)$,
	\[
		|x_i - x_j|^2 = \frac{1}{\mu_{ij}}.
	\]
\end{definition}
Furthermore, the condition of Darboux deformation keeping the arc-length polarization was identified:
\begin{fact}[{\cite{cho_infinitesimal_2020}}]\label{fact:reduction2}
	Let $f : \Sigma \times I \to \mathbb{C}$ be a Darboux deformation of a discrete arc-length polarized curve $x:(\Sigma,\tfrac{1}{\mu}) \to \mathbb{C}$ with parameter function $m$ so that $f(s_0) = x$ for some $s_0 \in I$.
	Then $f(s_1):(\Sigma,\tfrac{1}{\mu}) \to \mathbb{C}$ is arc-length polarized for all $s_1 \in I$ if and only if $f_i : (I, \frac{\dif s^2}{m}) \to \mathbb{C}$ is (smooth) arc-length polarized for some $i \in \Sigma$.
\end{fact}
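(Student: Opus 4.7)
The plan is to reduce this claim to an edgewise application of Fact~\ref{fact:reduction1}. The key observation is that on every edge $(ij)$ of $\Sigma$, the Darboux deformation equation $f'_i f'_j / (f_i - f_j)^2 = \mu_{ij}/m$ is precisely the smooth Darboux pair condition, so the pair $f_i, f_j : (I, \frac{\dif s^2}{m}) \to \mathbb{C}$ is a smooth Darboux pair with parameter $\mu_{ij}$. This immediately opens the door to applying the smooth reduction fact on each edge.

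For the ($\Leftarrow$) direction, I would assume $f_{i_0}$ is smoothly arc-length polarized for some $i_0 \in \Sigma$. For each edge $(i_0, j)$ incident to $i_0$, the initial condition $f(s_0) = x$ combined with the discrete arc-length polarization of $x$ yields $|f_{i_0}(s_0) - f_j(s_0)|^2 = |x_{i_0} - x_j|^2 = 1/\mu_{i_0 j}$, and Fact~\ref{fact:reduction1} then implies $f_j$ is also smoothly arc-length polarized. Iterating along the connected discrete interval $\Sigma$, every $f_i$ is smoothly arc-length polarized. Finally, taking moduli in the Darboux pair equation on each edge gives $|f_i'|\,|f_j'| = (\mu_{ij}/m)|f_i - f_j|^2$; substituting $|f_i'|^2 = |f_j'|^2 = 1/m$ produces $|f_i - f_j|^2 = 1/\mu_{ij}$ for all $s \in I$, so $f(s)$ is discretely arc-length polarized for all $s$.

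For the ($\Rightarrow$) direction, I would assume $f(s)$ is discretely arc-length polarized for all $s \in I$ and analyze the constraints edgewise. Writing $f_i' = A(f_i - f_j)$ and $f_j' = B(f_i - f_j)$ with $A, B : I \to \mathbb{C}$, the Darboux pair condition becomes $AB = \mu_{ij}/m$, while differentiating $|f_i - f_j|^2 = 1/\mu_{ij}$ forces $A - B$ to be purely imaginary. An elementary case split, based on the condition that $AB$ be real, separates two regimes: the case $B = \bar{A}$, which yields $|f_i'|^2 = |f_j'|^2 = 1/m$ (so $f_i$ is arc-length polarized), and a degenerate case in which both $A$ and $B$ are purely imaginary. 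Ruling out the latter globally along $\Sigma$ (using the compatibility of $f_i'$ at a vertex shared by two edges, together with the initial data $f(s_0) = x$) yields some $i_0 \in \Sigma$ for which $f_{i_0}$ is smoothly arc-length polarized.

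I expect the main obstacle to be the rigorous exclusion of the degenerate ``pure rotation'' branch in the ($\Rightarrow$) direction. Consistency of $f_j'$ at a shared vertex of two edges $(ij)$ and $(jk)$ forces $f_i - f_j$ and $f_j - f_k$ to be real multiples of one another, hence a collinearity condition on adjacent vertices; played off against the initial configuration $f(s_0) = x$, this should force at least one edge into the $B = \bar{A}$ regime, which then propagates via the ($\Leftarrow$) argument to arc-length polarize all of $f_i$.
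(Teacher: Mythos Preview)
This paper does not prove Fact~\ref{fact:reduction2}; it is quoted from \cite{cho_infinitesimal_2020} as background in the preliminaries section, with no argument supplied here. There is therefore no proof in the present paper against which to compare your attempt.

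For what it is worth, your ($\Leftarrow$) direction is correct and is exactly the natural reduction to Fact~\ref{fact:reduction1}, propagated edge by edge along the interval. Your ($\Rightarrow$) direction, however, is genuinely incomplete at the point you yourself flag. The purely imaginary branch is not an artefact that can always be discarded: on a single edge with $\mu_{ij}=m=1$ one may take $f_i-f_j=e^{i\phi}$ with $\phi'=c+c^{-1}$ for any real constant $c\neq\pm1$, obtaining $f_i'=ic\,e^{i\phi}$ and $f_j'=-ic^{-1}e^{i\phi}$, so that $|f_i-f_j|\equiv 1$ while $|f_i'|=|c|\neq 1$ and $|f_j'|=|c|^{-1}\neq 1$. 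Thus every slice $f(s)$ is discretely arc-length polarized yet neither smooth curve is arc-length polarized. Your multi-edge consistency argument only forces collinearity of consecutive edge vectors when the degenerate branch occurs on adjacent edges; this is a constraint, not a contradiction, and a straight initial configuration $x$ can flow entirely within that branch. So the exclusion of the pure-rotation regime, and hence the ($\Rightarrow$) implication, does not close as sketched without additional hypotheses or a different mechanism.
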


Finally, the semi-discrete potential mKdV equation was obtained via Darboux deformations of discrete polarized curves keeping the arc-length condition, with $'$ denoting the differentiation with respect to $s$:
\begin{fact}[{\cite{cho_infinitesimal_2020}}]\label{fact:ddsemi}
	Let $f : \Sigma \times I \to \mathbb{C}$ be a Darboux deformation of a discrete arc-length polarized curve $x:(\Sigma,\tfrac{1}{\mu}) \to \mathbb{C}$ with parameter function $m$ keeping the arc-length polarization.
	Assuming without loss of generality that $m \equiv 1$, and defining $\theta_i : I \to \mathbb{R}$ as the tangential angle of $f_i : (I, \frac{\dif s^2}{m}) \to \mathbb{C}$ for all $i \in \Sigma$, that is, $f_i' = e^{\sqrt{-1} \theta_i}$, $\theta$ becomes the solution to the semi-discrete potential mKdV equation:
	\[
		\left(\frac{\theta_i + \theta_j}{2}\right)' = \frac{2}{\sqrt{\mu_{ij}}} \sin \left(\frac{\theta_j - \theta_i}{2}\right)
	\]
	on any edge $(ij)$.
\end{fact}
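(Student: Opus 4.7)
The plan is to translate the two defining conditions --- the Darboux deformation equation with $m \equiv 1$ together with the preserved discrete arc-length polarization --- into an explicit formula for the edge vector $f_i - f_j$ in terms of the tangential angles $\theta_i$ and $\theta_j$, and then differentiate in $s$.

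First, I would substitute $f_i' = e^{\sqrt{-1}\theta_i}$ and $f_j' = e^{\sqrt{-1}\theta_j}$ into the Darboux deformation condition $f_i' f_j' = \mu_{ij}(f_i - f_j)^2$ to obtain
\[
(f_i - f_j)^2 = \frac{1}{\mu_{ij}}\, e^{\sqrt{-1}(\theta_i + \theta_j)}.
\]
On the other hand, by Fact~\ref{fact:reduction2} the discrete arc-length condition $|f_i - f_j|^2 = \mu_{ij}^{-1}$ holds at every $s \in I$, which pins down the modulus. Taking square roots, the two conditions together force
\[
f_j - f_i = \frac{\varepsilon}{\sqrt{\mu_{ij}}}\, e^{\sqrt{-1}(\theta_i + \theta_j)/2},
\]
for a sign $\varepsilon \in \{\pm 1\}$ that is constant in $s$ by continuity (the modulus never vanishes) and that can be normalised to $\varepsilon = 1$ by a $\pi$-shift of one of the tangential angles. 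Geometrically, this is the satisfying statement that each edge of $f(s)$ bisects the angle between the two unit tangent vectors at its endpoints.

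Next, I would differentiate this identity in $s$, using that $\mu_{ij}$ is independent of $s$. The left side factors via the standard trigonometric identity
\[
e^{\sqrt{-1}\theta_j} - e^{\sqrt{-1}\theta_i} = 2\sqrt{-1}\sin\!\left(\frac{\theta_j - \theta_i}{2}\right) e^{\sqrt{-1}(\theta_i + \theta_j)/2},
\]
while the right side contributes $\frac{\sqrt{-1}}{\sqrt{\mu_{ij}}}\left(\frac{\theta_i + \theta_j}{2}\right)' e^{\sqrt{-1}(\theta_i + \theta_j)/2}$. Cancelling the common factor $\sqrt{-1}\, e^{\sqrt{-1}(\theta_i + \theta_j)/2}$ and rearranging yields the stated semi-discrete potential mKdV equation on every edge $(ij)$.

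I do not expect a substantive obstacle: the entire derivation is a direct algebraic manipulation once the arc-length-preserving reduction of Fact~\ref{fact:reduction2} is in hand. The only subtle bookkeeping is the consistent choice of branch for the square root together with the sign $\varepsilon$, but these are fixed by the initial configuration of $x$ at $s_0$ and propagate continuously to all $s \in I$ because $|f_i - f_j|$ is bounded away from zero.
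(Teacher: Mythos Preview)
The paper does not give its own proof of this statement: it is recorded in the preliminaries as a \emph{Fact} quoted from \cite{cho_infinitesimal_2020}, with no argument supplied. So there is nothing in the present paper to compare your proposal against.

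That said, your approach is the natural and correct one, and is essentially how the result is obtained in the cited source: write $f_i'=e^{\sqrt{-1}\theta_i}$, use the Darboux relation to express $f_j-f_i$ as $\varepsilon\,\mu_{ij}^{-1/2}e^{\sqrt{-1}(\theta_i+\theta_j)/2}$, and differentiate. Two small remarks. First, once $|f_i'|=1$ the Darboux relation alone already fixes both the modulus and the argument of $(f_i-f_j)^2$, so invoking the discrete arc-length condition separately to ``pin down the modulus'' is redundant (though harmless). Second, when you actually carry out the last cancellation you will find the coefficient $2\sqrt{\mu_{ij}}$ rather than $2/\sqrt{\mu_{ij}}$; with the conventions stated in this paper ($|f_i-f_j|^2=1/\mu_{ij}$ and $f_i'f_j'/(f_i-f_j)^2=\mu_{ij}$) that is what the algebra gives, so the displayed constant appears to be a slip. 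Your derivation is sound; just be aware of this discrepancy rather than forcing the algebra to match the printed formula.
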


\begin{remark}
	The Darboux deformation described in Fact~\ref{fact:ddsemi} keeps the length of the edges; therefore, this deformation can be viewed as certain isoperimetric deformation of a discrete curve (see, for example, \cite{inoguchi_explicit_2012}).
\end{remark}

\section{Discrete potential mKdV equation via permutability}

\subsection{Permutability between Darboux transformations and Darboux deformations}
With the goal of interpreting the $3$-dimensional system of successive Darboux transformations of semi-discrete isothermic surfaces as permutability between Darboux transformations and Darboux deformations of discrete polarized curves, let us first explicitly state the definition of Darboux transformations for discrete polarized curves.

\begin{definition}\label{def:dFlow}
	Two discrete polarized curves $x, \hat{x}: (\Sigma, \frac{1}{\mu}) \to \mathbb{C}$ are called a \emph{Darboux pair with parameter $\hat{\mu}$} if, on every edge $(ij)$,
		\begin{equation}\label{eqn:dFlow}
			\cratio(x_i, x_j, \hat{x}_j, \hat{x}_i) = \frac{x_i - x_j}{x_j - \hat{x}_j} \frac{\hat{x}_j - \hat{x}_i}{\hat{x}_i - x_i} = \frac{\hat{\mu}}{\mu_{ij}},
		\end{equation}
	for some non-zero constant $\hat{\mu} \in \mathbb{R}\setminus \{0\}$. We call one of the curves a \emph{Darboux transform} of the other.
\end{definition}

\begin{remark}
	The condition \eqref{eqn:dFlow} for Darboux transformations of discrete polarized curves is identical to the well-known characterisation of discrete isothermic surfaces, first introduced in \cite{bobenko_discrete_1996-1}, since successive Darboux transformations of discrete polarized curve should yield discrete isothermic surfaces, analogous to the semi-discrete case of \cite{burstall_semi-discrete_2016}.
\end{remark}

Note that a Darboux transformation is determined by the choice of the parameter $\hat{\mu}$ and an initial condition $\hat{x}_i$ at some vertex $i \in \Sigma$.

\begin{figure}
	\centering
	\savebox{\mybox}{\includegraphics[width=0.45\textwidth]{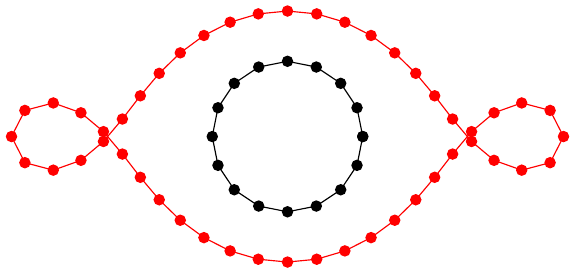}}
	\begin{subfigure}{0.45\textwidth}
		\vbox to \ht\mybox{%
			\vfill
			\includegraphics[width=1\textwidth]{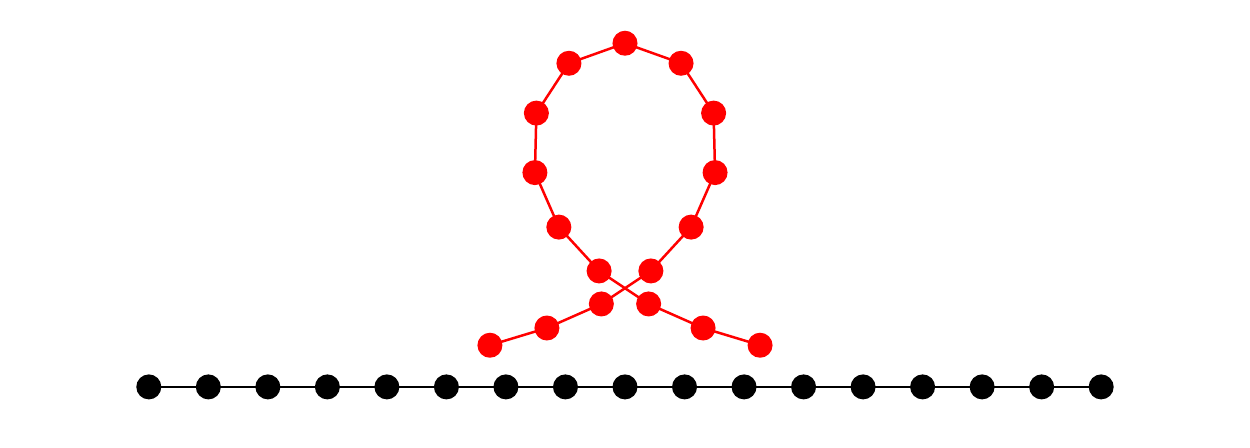}
			\vfill
		}
		\caption{}
	\end{subfigure}
	~
	\begin{subfigure}{0.45\textwidth}
		\usebox{\mybox}
		\caption{}
	\end{subfigure}

	\par\bigskip
	\savebox{\mybox}{\includegraphics[width=0.45\textwidth]{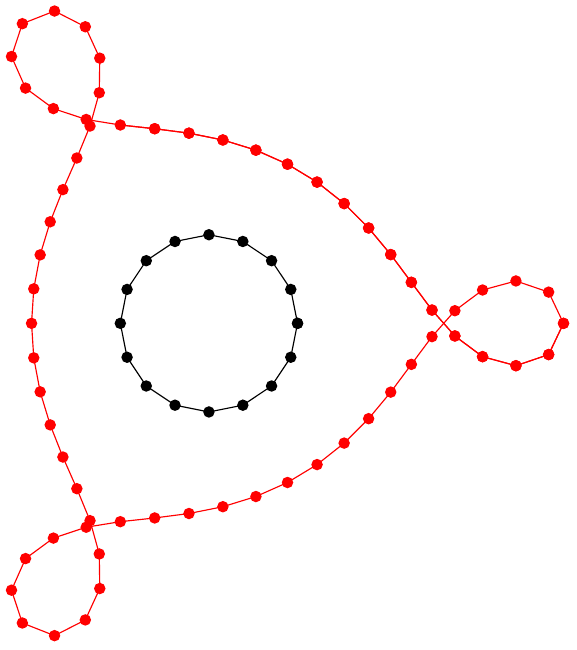}}
	\begin{subfigure}{0.45\textwidth}
		\usebox{\mybox}
		\caption{}
	\end{subfigure}
	~
	\begin{subfigure}{0.45\textwidth}
		\vbox to \ht\mybox{%
			\vfill
			\includegraphics[width=1\textwidth]{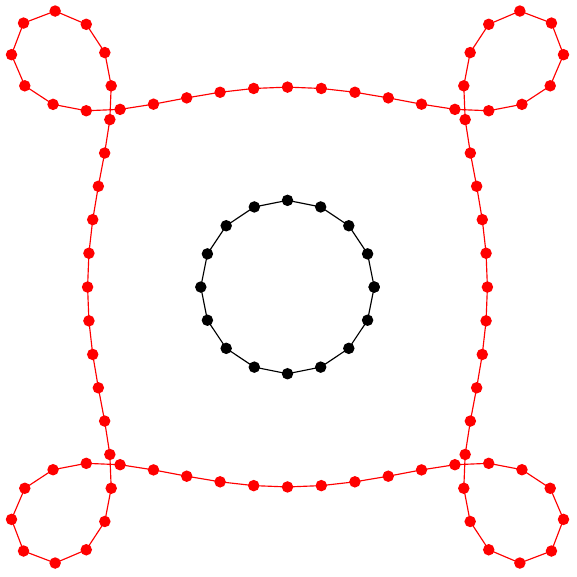}
			\vfill
		}
		\caption{}
	\end{subfigure}
	\caption{Examples of Darboux transforms (red curves) of a given curve (black curves) keeping the arc-length polarization, determined via Equation \eqref{def:dFlow}. In particular, figures (B), (C), (D) show Darboux transforms (red curve) of a triply, quadruply, quintuply covered discrete circle (black circle), respectively.}
	\label{fig:examples1}
\end{figure}

Now we identify the condition for the Darboux transform of an arc-length polarized discrete curve to be arc-length polarized again, with some examples shown in Figure~\ref{fig:examples1}.
\begin{proposition}\label{prop:discBy} 
	Let $x, \hat{x} : (\Sigma, \frac{1}{\mu}) \to \mathbb{C}$ be a Darboux pair with parameter $\hat{\mu}$ where $x$ is arc-length polarized.
	Then $\hat{x}$ is also arc-length polarized if and only if $|x_i - \hat{x}_i|^2 = \frac{1}{\hat{\mu}}$ at some vertex $i \in \Sigma$.
\end{proposition}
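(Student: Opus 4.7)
My plan is to establish a multiplicative propagation identity for the scalar $\hat{\mu}|\hat{x}_i - x_i|^2 - 1$ along each edge of $\Sigma$, from which both directions of the equivalence follow by a short induction.

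I would start by rewriting \eqref{eqn:dFlow} in the equivalent bilinear form
\[
(x_j - x_i)(\hat{x}_j - \hat{x}_i) = \frac{\hat{\mu}}{\mu_{ij}}(\hat{x}_i - x_i)(\hat{x}_j - x_j),
\]
obtained by clearing denominators. Taking moduli squared and using the arc-length polarization $|x_j - x_i|^2 = 1/\mu_{ij}$ of $x$ immediately gives
\[
|\hat{x}_j - \hat{x}_i|^2 = \frac{\hat{\mu}^2}{\mu_{ij}}\,|\hat{x}_i - x_i|^2\,|\hat{x}_j - x_j|^2,
\]
so the arc-length polarization of $\hat{x}$ on the edge $(ij)$ is equivalent to the product condition $|\hat{x}_i - x_i|^2\,|\hat{x}_j - x_j|^2 = 1/\hat{\mu}^2$.

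Next, I would exploit the fact that \eqref{eqn:dFlow} is linear in $\hat{x}_j$ to solve explicitly for $\hat{x}_j - x_j$ in terms of $x_i, x_j$ and $\hat{x}_i - x_i$, substitute into $|\hat{x}_j - x_j|^2$, and simplify with $|x_i - x_j|^2 = 1/\mu_{ij}$. The ensuing routine algebra yields an identity of the multiplicative form
\[
\hat{\mu}\,|\hat{x}_j - x_j|^2 - 1 \,=\, -\,\frac{\bigl(\tfrac{\hat{\mu}}{\mu_{ij}} - 1\bigr)\bigl(\hat{\mu}\,|\hat{x}_i - x_i|^2 - 1\bigr)}{D_{ij}},
\]
where $D_{ij}$ is an explicit real quantity depending on $x_i, x_j, \hat{x}_i$ (essentially the squared modulus $\mu_{ij}|(x_i - x_j) + \tfrac{\hat{\mu}}{\mu_{ij}}(\hat{x}_i - x_i)|^2$).

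This identity shows that $\hat{\mu}|\hat{x}_i - x_i|^2 = 1$ at one endpoint of an edge is equivalent to the same equation at the other endpoint. A short induction along the connected interval $\Sigma$ then propagates the condition from a single vertex to all of $\Sigma$, and combined with the first step yields $|\hat{x}_j - \hat{x}_i|^2 = 1/\mu_{ij}$ on every edge; conversely, arc-length polarization of $\hat{x}$ together with the propagation identity forces $\hat{\mu}|\hat{x}_i - x_i|^2 = 1$ at every vertex. I expect the main obstacle to be the algebraic simplification yielding the propagation identity and bookkeeping for the degenerate edges where $\hat{\mu} = \mu_{ij}$ or $D_{ij} = 0$, both of which correspond to special concyclic configurations of $\{x_i, x_j, \hat{x}_i, \hat{x}_j\}$ and can be handled by direct inspection of the explicit formula for $\hat{x}_j - x_j$.
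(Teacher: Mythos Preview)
Your proposal is correct, and the underlying computation coincides with the paper's: both solve the cross-ratio relation \eqref{eqn:dFlow} for the unknown vertex and take moduli, using only $|x_i-x_j|^2=1/\mu_{ij}$. The paper solves for $\hat x_j-\hat x_i$ (rather than $\hat x_j-x_j$) and, under the assumption $|x_i-\hat x_i|^2=1/\hat\mu$, reduces to the identity
\[
|\mu_{ij}(x_i-x_j)+\hat\mu(\hat x_i-x_i)|^2=\mu_{ij}\hat\mu\,|x_j-\hat x_i|^2,
\]
which gives $|\hat x_j-\hat x_i|^2=1/\mu_{ij}$ directly; the converse is then dispatched by the symmetry $x_i\leftrightarrow\hat x_j$ of the quadrilateral, and the propagation along $\Sigma$ is left implicit. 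Your route via the propagation identity for the defect $\hat\mu|\hat x_i-x_i|^2-1$ is the same algebra in a slightly different packaging; its payoff is that both implications and the induction over $\Sigma$ become visible in one formula, at the small cost of having to treat the degenerate edge $\hat\mu=\mu_{ij}$ separately (a case the paper's direct edge-length computation does not need to single out).
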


\begin{proof}
Assume first that $|x_i - \hat{x}_i|^2 = \frac{1}{\hat{\mu}}$ at some vertex $i \in \Sigma$. 
Then on an edge $(ij)$, the definition of Darboux pair with parameter $\hat{\mu}$ \eqref{eqn:dFlow} tells us 
  \[
      \hat{x}_j-\hat{x}_i=\frac{\hat{\mu}(\hat{x}_i-x_i)(x_j-\hat{x}_i)}
                                       {\mu_{ij}(x_i-x_j)+\hat{\mu}(\hat{x}_i-x_i)}.
  \]
Then a computation gives 
  \[
      |\mu_{ij}(x_i-x_j)+\hat{\mu}(\hat{x}_i-x_i)|^2=\mu_{ij} \hat{\mu}|x_j-\hat{x}_i|^2;
  \]
hence, 
  \[
      |\hat{x}_j-\hat{x}_i|^2
                                  =\frac{\hat{\mu}|x_j-\hat{x}_i|^2}{\mu_{ij} \hat{\mu}|x_j-\hat{x}_i|^2}
                                  =\frac{1}{\mu_{ij}}.
  \]
We can prove the converse claim similarly, by switching the roles of $\hat{x}_j$ and $x_i$.
\end{proof}

\begin{remark}
	Note that now for arc-length polarization preserving Darboux transformations, the parameter $\hat{\mu}$ is a positive constant.
\end{remark}

\begin{remark}\label{rem:omg}
	If two arc-length polarized discrete curves $x, \hat{x} : (\Sigma, \tfrac{1}{\mu}) \to \mathbb{C}$ are a Darboux pair with parameter $\hat{\mu}$, then the cross-ratio condition \eqref{eqn:dFlow} and Proposition~\ref{prop:discBy} immediately implies that $|x_i - \hat{x}_i|^2 = \frac{1}{\hat{\mu}}$ for all $i \in \Sigma$.
\end{remark}

We now discuss the permutability between infinitesimal Darboux transformation and Darboux transformation of a discrete polarized curve.
Let two discrete polarized curves $x, \hat{x} : (\Sigma, \tfrac{1}{\mu}) \to \mathbb{C}$ be a Darboux pair with parameter $\hat{\mu}$, and further assume that $f : \Sigma \times I \to \mathbb{C}$ is a Darboux deformation of $x$ with parameter function $m$, so that $f(s_0) = x$ for some fixed $s_0 \in I$.
Then we have that the two smooth polarized curves $f_i, f_j : (I, \tfrac{\dif s^2}{m}) \to \mathbb{C}$ are a Darboux pair with parameter $\mu_{ij}$ on every edge $(ij)$  (see Figure~\ref{fig:perm1}).

For some fixed $i \in \Sigma$, let $\hat{f}_i : (I, \tfrac{\dif s^2}{m}) \to \mathbb{C}$ be a Darboux transform of $f_i$ with parameter $\hat{\mu}$, determined uniquely by taking the initial condition as $\hat{f}_i(s_0) = \hat{x}_i$  (see Figure~\ref{fig:perm2}).
Therefore, the permutability of Darboux transformations (see Fact~\ref{fact:perm}) of smooth polarized curves ensures the existence of $\hat{f}_j$ that is simultaneously a Darboux transform of $f_j$ and $\hat{f}_i$ with parameters $\hat{\mu}$ and $\mu_{ij}$, respectively (see Figure~\ref{fig:perm3}).
Defining $\hat{f} : \Sigma \times I \to \mathbb{C}$ recursively using permutability (see Figure~\ref{fig:perm4}), the cross-ratio condition for the permutability \eqref{eqn:crsmooth} implies that
\begin{enumerate}
	\item $\hat{f}(s_0) = \hat{x}$, giving us that $\hat{f}$ is a Darboux deformation of $\hat{x}$ with parameter function $m$, and
	\item $f(s_1)$ and $\hat{f}(s_1)$ are a Darboux pair with parameter $\hat{\mu}$ for all $s_1 \in I$.
\end{enumerate}

\begin{figure}
	\centering
	\savebox{\mybox}{\includegraphics[width=0.48\textwidth]{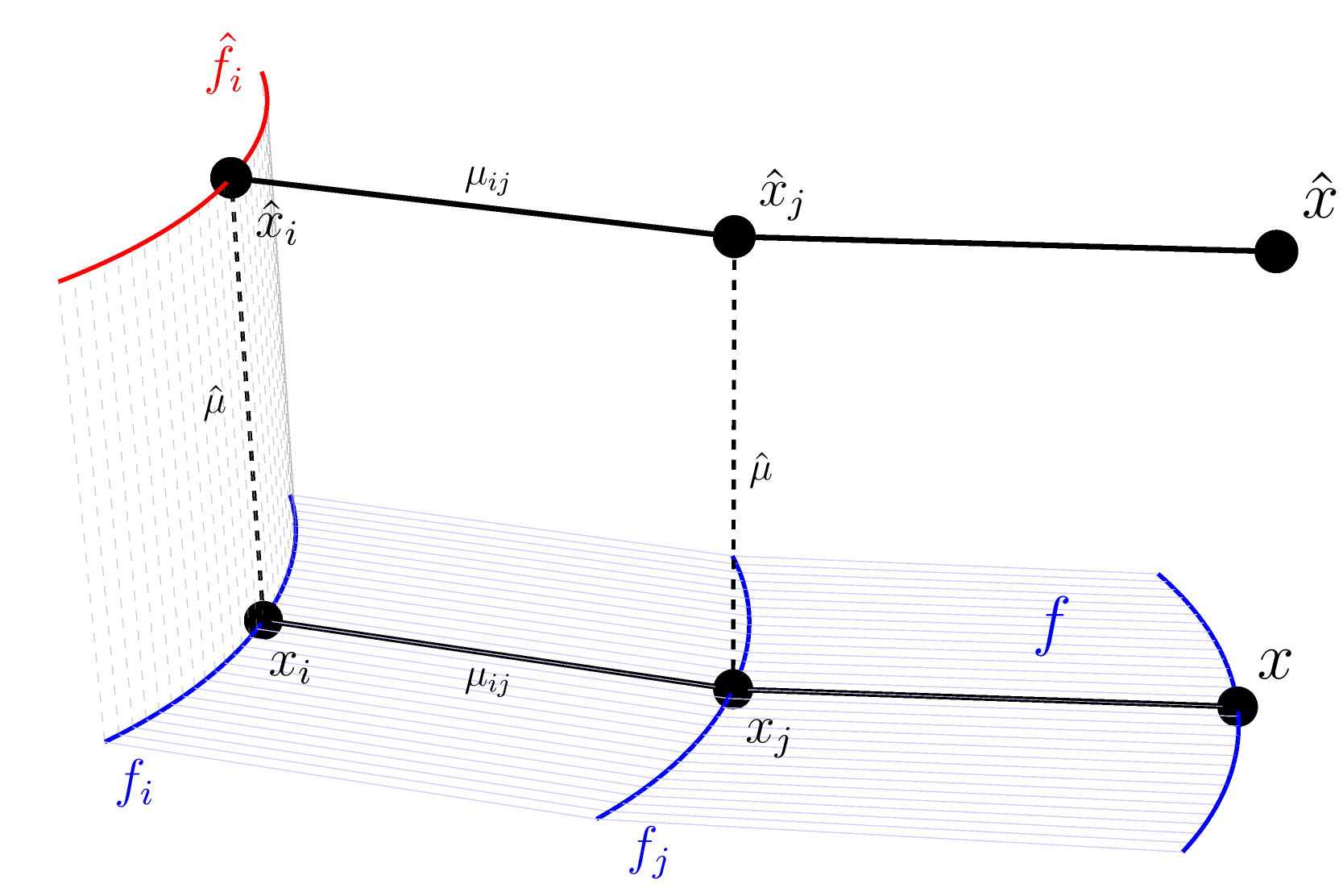}}
	\begin{subfigure}{0.48\textwidth}
		\vbox to \ht\mybox{%
			\vfill
			\includegraphics[width=1\textwidth]{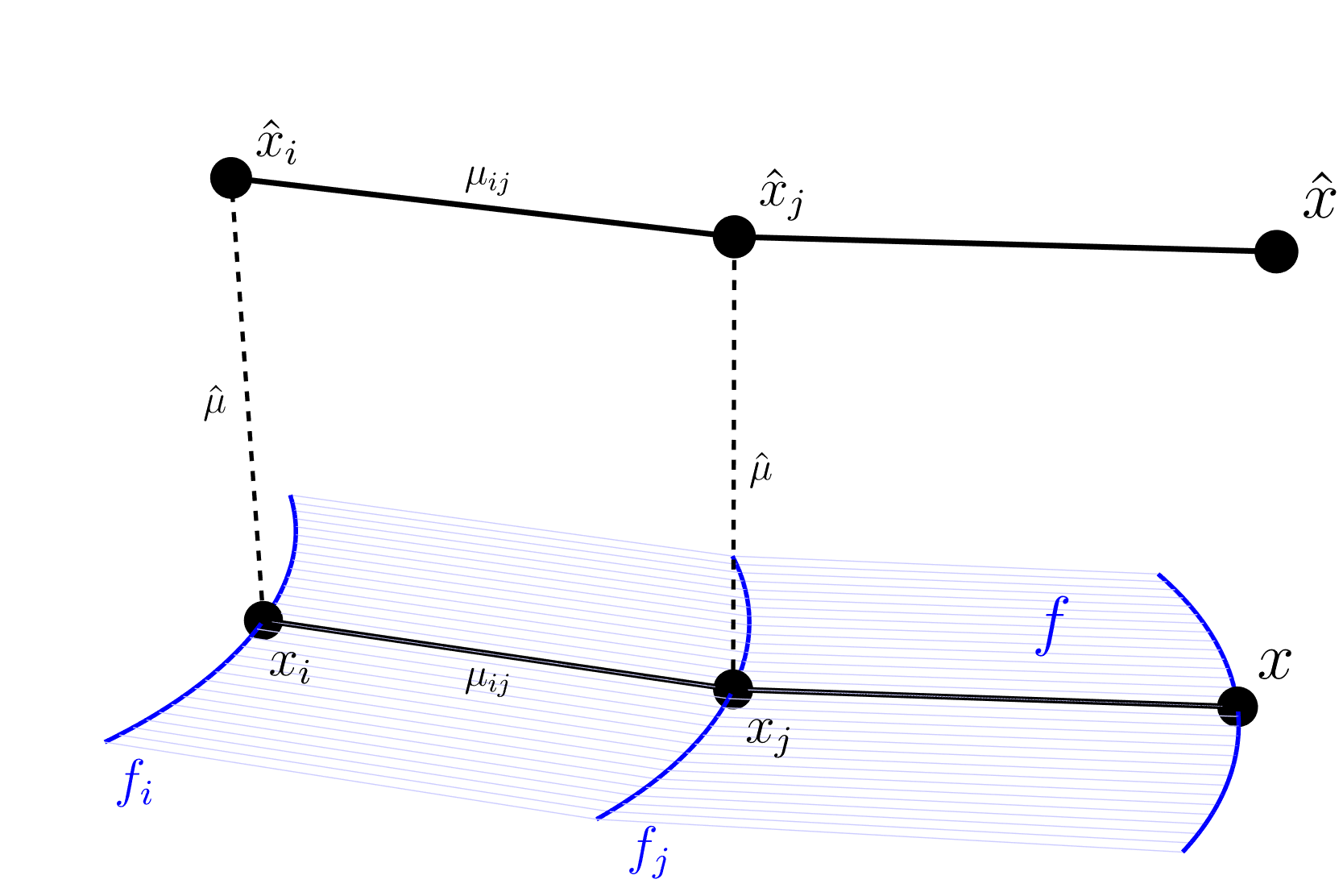}
			\vfill
		}
		\caption{}
		\label{fig:perm1}
	\end{subfigure}
	~
	\begin{subfigure}{0.48\textwidth}
		\usebox{\mybox}
		\caption{}
		\label{fig:perm2}
	\end{subfigure}

	\par\bigskip
	\savebox{\mybox}{\includegraphics[width=0.48\textwidth]{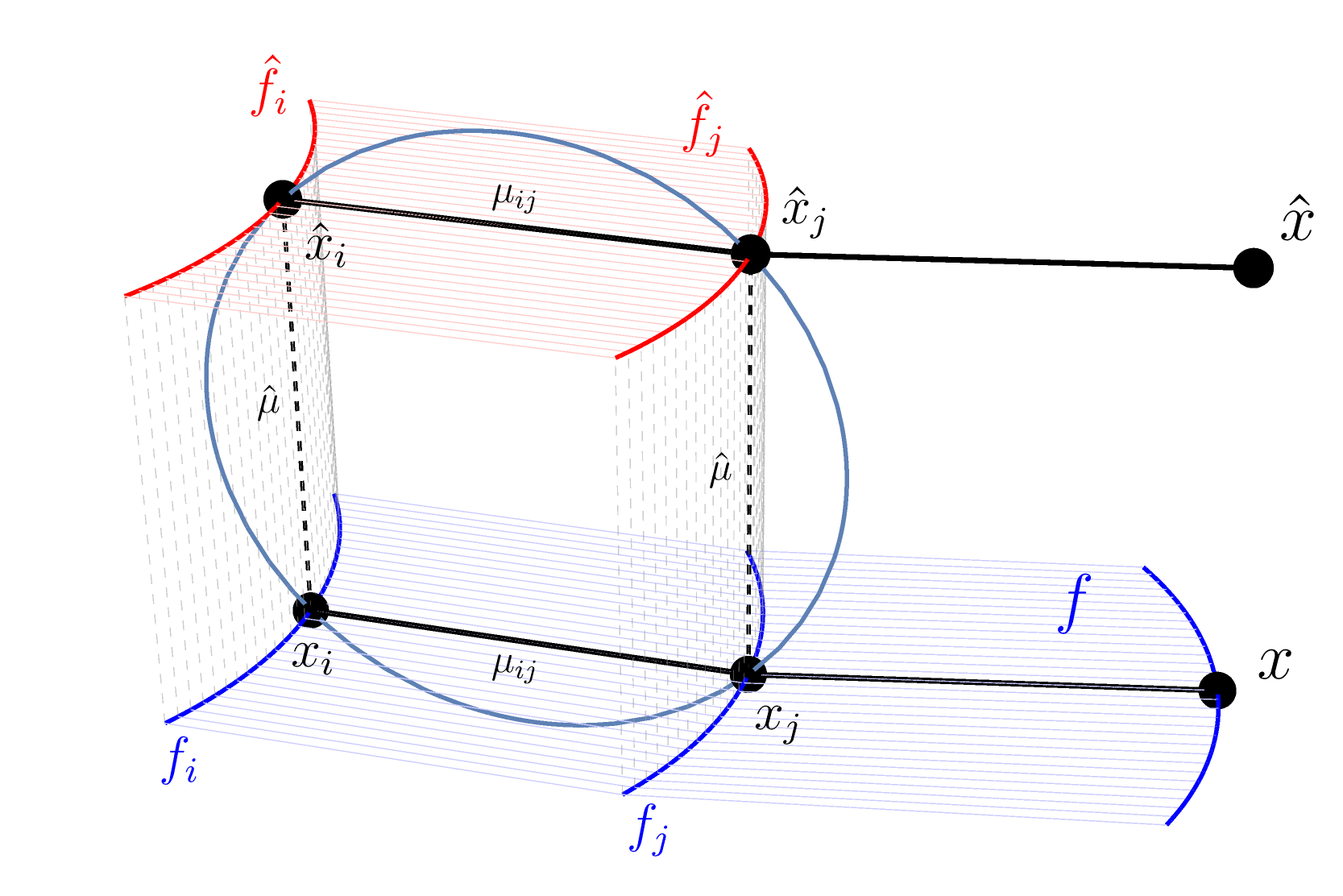}}
	\begin{subfigure}{0.48\textwidth}
		\usebox{\mybox}
		\caption{}
		\label{fig:perm3}
	\end{subfigure}
	~
	\begin{subfigure}{0.48\textwidth}
		\vbox to \ht\mybox{%
			\vfill
			\includegraphics[width=1\textwidth]{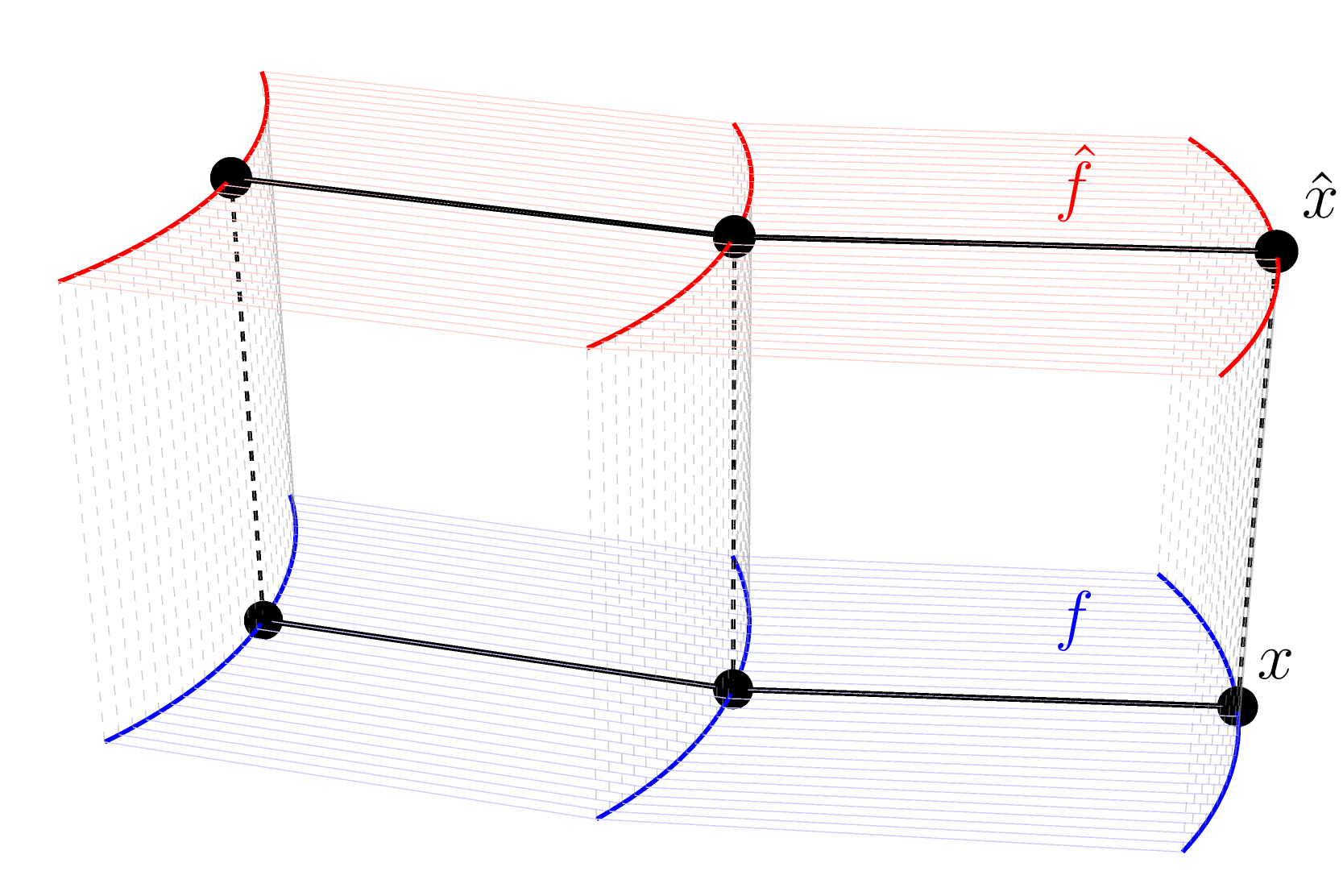}
			\vfill
		}
		\caption{}
		\label{fig:perm4}
	\end{subfigure}
	\caption{Conceptual diagram explaining the permutability between Darboux transformation and Darboux deformation of discrete polarized curves coming from Darboux transformation of semi-discrete isothermic surfaces.}
	\label{fig:examples}
\end{figure}

Thus, we have the permutability between Darboux transformation and Darboux deformation of a discrete polarized curve coming directly from the Darboux transformation of semi-discrete isothermic surfaces:
\begin{proposition}\label{infDarbPerm}
	Let $f:\Sigma \times I \rightarrow \mathbb{C}$ be a Darboux deformation of a discrete polarized curve $x:(\Sigma,\tfrac{1}{\mu}) \rightarrow \mathbb{C}$ with parameter function $m$ and let $\hat{x} $ be a Darboux transform of $x$ with parameter $\hat{\mu}$. Then there is a Darboux deformation $\hat{f}$ of $\hat{x}$ with parameter function $m$ so that, for any $s \in I$, $\hat{f}(s)$ is a Darboux transform of $f(s)$ with parameter $\hat{\mu}$.
\end{proposition}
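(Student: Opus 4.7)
The plan is to make the pictorial argument immediately preceding the proposition into a rigorous proof by propagating a Darboux transform along the edges of $\Sigma\subset\mathbb{Z}$. By hypothesis, on every edge $(ij)$ the pair $f_i,f_j:(I,\tfrac{\dif s^2}{m})\to\mathbb{C}$ is a smooth Darboux pair with parameter $\mu_{ij}$. I would first fix a vertex $i_0\in\Sigma$ and let $\hat{f}_{i_0}$ be the unique Darboux transform of $f_{i_0}$ with parameter $\hat{\mu}$ determined by the initial condition $\hat{f}_{i_0}(s_0)=\hat{x}_{i_0}$. Then for each edge $(ij)$ on which $\hat{f}_i$ has already been defined, I apply the smooth Bianchi permutability of Fact~\ref{fact:perm} with $(f_i,f_j,\hat{f}_i)$ playing the role of $(f_0,f_1,f_2)$ to produce the fourth curve $\hat{f}_j$, which is simultaneously a Darboux transform of $f_j$ with parameter $\hat{\mu}$ and of $\hat{f}_i$ with parameter $\mu_{ij}$. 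Iterating in both directions along $\Sigma$ gives $\hat{f}:\Sigma\times I\to\mathbb{C}$; since $\Sigma$ is one-dimensional, no closing-up compatibility issue arises.

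The key verification is that $\hat{f}(s_0)=\hat{x}$, which I would establish by induction on the graph distance to $i_0$. The base case holds by construction. For the inductive step on an edge $(ij)$, evaluating the Bianchi cross-ratio identity \eqref{eqn:crsmooth} at $s=s_0$ yields
\[
\cratio\bigl(x_i,x_j,\hat{f}_j(s_0),\hat{x}_i\bigr)=\frac{\hat{\mu}}{\mu_{ij}},
\]
using $f(s_0)=x$ and $\hat{f}_i(s_0)=\hat{x}_i$. On the other hand, the discrete Darboux pair hypothesis \eqref{eqn:dFlow} reads $\cratio(x_i,x_j,\hat{x}_j,\hat{x}_i)=\hat{\mu}/\mu_{ij}$. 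The cross-ratio with three fixed distinct arguments being an injective Möbius function of the fourth then forces $\hat{f}_j(s_0)=\hat{x}_j$.

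Both conclusions of the proposition should then fall out directly from the construction: the smooth Darboux pair relation between $\hat{f}_i$ and $\hat{f}_j$ with parameter $\mu_{ij}$ on every edge, combined with $\hat{f}(s_0)=\hat{x}$, is precisely the statement that $\hat{f}$ is a Darboux deformation of $\hat{x}$ with parameter function $m$, while the Bianchi cross-ratio identity holding for every $s\in I$ is, by \eqref{eqn:dFlow}, precisely the statement that $f(s)$ and $\hat{f}(s)$ form a discrete Darboux pair with parameter $\hat{\mu}$. The only place I expect mild friction is the uniqueness-of-cross-ratio step: nondegeneracy of the polarization $\tfrac{1}{\mu_{ij}}$ gives $x_i\ne x_j$, and nondegeneracy of $\hat{\mu}$ in \eqref{eqn:dFlow} gives $\hat{x}_i\ne x_i$, but the degenerate configuration $\hat{x}_i=x_j$ would need to be excluded, which one expects to handle by a generic-position or continuity argument.
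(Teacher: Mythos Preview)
Your proposal is correct and follows essentially the same route as the paper: the paper's argument is precisely the discussion preceding the proposition, which fixes a base vertex, takes the unique smooth Darboux transform $\hat{f}_{i_0}$ with initial value $\hat{x}_{i_0}$, propagates $\hat{f}$ along $\Sigma$ via Fact~\ref{fact:perm}, and then reads off the two conclusions from the cross-ratio identity \eqref{eqn:crsmooth} specialized at $s_0$ (to get $\hat{f}(s_0)=\hat{x}$) and for general $s$ (to get the discrete Darboux pair condition \eqref{eqn:dFlow}). Your write-up is in fact more explicit than the paper's, which leaves the inductive uniqueness-of-cross-ratio step and the nondegeneracy caveat implicit.
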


In such case of $f$ and $\hat{f}$ in Proposition~\ref{infDarbPerm}, we say that $f$ and $\hat{f}$ form \emph{infinitesimal Bianchi cubes}.

In fact, the permutability holds even with the extra condition of keeping the arc-length polarization:
\begin{proposition} \label{PermOfDarb}
	Let two arc-length polarized discrete curves $x, \hat{x} : (\Sigma,\tfrac{1}{\mu}) \rightarrow \mathbb{C}$ be a Darboux pair with parameter $\hat{\mu}$, and further assume that $f, \hat{f}: \Sigma \times I \rightarrow \mathbb{C}$ are Darboux deformations of $x, \hat{x}$ with parameter function $m$, respectively, forming infinitesimal Bianchi cubes.
	If $f$ is a Darboux deformation keeping the arc-length polarization, then so is $\hat{f}$.
\end{proposition}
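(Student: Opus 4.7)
The plan is to reduce the question to the smooth-curve level and invoke Facts~\ref{fact:reduction1} and~\ref{fact:reduction2} together with Remark~\ref{rem:omg}. Since $f$ is a Darboux deformation of $x$ keeping the arc-length polarization, Fact~\ref{fact:reduction2} (applied in the forward direction) tells us that, for some vertex $i\in\Sigma$, the smooth polarized curve $f_i:(I,\tfrac{\dif s^2}{m})\to\mathbb{C}$ is arc-length polarized. The strategy is to show that the corresponding vertex-wise curve $\hat{f}_i:(I,\tfrac{\dif s^2}{m})\to\mathbb{C}$ is also smoothly arc-length polarized; Fact~\ref{fact:reduction2}, applied in the reverse direction to $\hat{f}$, will then immediately yield the conclusion.

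Next, I would translate the infinitesimal Bianchi cube hypothesis into information about the vertex-wise curves. Since $f$ and $\hat{f}$ form infinitesimal Bianchi cubes, for every $s\in I$ the curve $\hat{f}(s)$ is a Darboux transform of $f(s)$ with parameter $\hat{\mu}$, and in particular $\hat{f}_i$ is a smooth Darboux transform of $f_i$ with parameter $\hat{\mu}$ in the sense of the first definition in Section~2. Moreover, $f(s_0)=x$ and, by construction of the infinitesimal Bianchi cube (Proposition~\ref{infDarbPerm}), $\hat{f}(s_0)=\hat{x}$. Thus at the point $s_0$ we have
\[
|f_i(s_0)-\hat{f}_i(s_0)|^2 \;=\; |x_i-\hat{x}_i|^2.
\]

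To evaluate the right-hand side I would invoke Remark~\ref{rem:omg}: since $x$ and $\hat{x}$ are both arc-length polarized discrete curves forming a Darboux pair with parameter $\hat{\mu}$, one has $|x_i-\hat{x}_i|^2=\tfrac{1}{\hat{\mu}}$ at every vertex, and in particular at the chosen $i$. So $|f_i(s_0)-\hat{f}_i(s_0)|^2=\tfrac{1}{\hat{\mu}}$ at $s_0$. Fact~\ref{fact:reduction1}, applied to the smooth Darboux pair $(f_i,\hat{f}_i)$ with parameter $\hat{\mu}$ — noting that $f_i$ is already arc-length polarized — now gives that $\hat{f}_i$ is arc-length polarized as well. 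Feeding this back into Fact~\ref{fact:reduction2} for $\hat{f}$ (with the same vertex $i$) shows that $\hat{f}(s)$ is arc-length polarized for all $s\in I$, as desired.

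I do not anticipate a serious obstacle: the proof is an assembly of results already recorded in the paper. The only point that warrants care is the bookkeeping of initial conditions — making sure that the same base point $s_0$ is used throughout so that the identification $\hat{f}_i(s_0)=\hat{x}_i$ is available from Proposition~\ref{infDarbPerm}, and that the arc-length condition $|x_i-\hat{x}_i|^2=\tfrac{1}{\hat{\mu}}$ from Remark~\ref{rem:omg} applies at precisely this $i$. Once this is pinned down, the rest is a direct three-line chain through Facts~\ref{fact:reduction1} and~\ref{fact:reduction2}.
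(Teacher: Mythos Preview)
Your proposal is correct and follows essentially the same argument as the paper's own proof: reduce via Fact~\ref{fact:reduction2} to checking that a single vertex-wise curve $\hat{f}_i$ is arc-length polarized, use Remark~\ref{rem:omg} and the initial condition $\hat{f}(s_0)=\hat{x}$ to get $|f_i(s_0)-\hat{f}_i(s_0)|^2=\tfrac{1}{\hat{\mu}}$, and conclude with Fact~\ref{fact:reduction1}. One small wording issue: the fact that $\hat{f}_i$ is a \emph{smooth} Darboux transform of $f_i$ with parameter $\hat{\mu}$ is not a consequence of the discrete-direction statement ``$\hat{f}(s)$ is a Darboux transform of $f(s)$ for every $s$'' but is rather part of the very construction of the infinitesimal Bianchi cube (the text preceding Proposition~\ref{infDarbPerm}); you use the right fact, just cite it from the construction rather than deduce it from the discrete transform.
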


\begin{proof}
	Note that we only need to show that the smooth curve $\hat{f}_i$ is arc-length polarized for some $i \in \Sigma$ by Fact~\ref{fact:reduction2}.
	For this, note that on any vertex $i$, since $f$ is a Darboux deformation keeping arc-length polarization, we have that the smooth curve $f_i$ is arc-length polarized by $\tfrac{\dif s^2}{m}$, again by Fact~\ref{fact:reduction2}.
	Fixing $s_0 \in I$ such that $f(s_0) = x$ and $\hat{f}(s_0) = \hat{x}$, Remark~\ref{rem:omg} then implies that 
	\begin{equation}\label{eqn:what}
		|\hat{f}_i(s_0) - f_i(s_0)|^2 = |\hat{x}_i - x_i|^2 = \frac{1}{\hat{\mu}}
	\end{equation}
for all $i \in \Sigma$.
	However, since the two smooth curves $f_i$ and $\hat{f}_i$ are a Darboux pair with parameter $\hat{\mu}$, and $f_i$ is arc-length polarized, \eqref{eqn:what} implies that $\hat{f}_i$ must also be arc-length polarized by Fact~\ref{fact:reduction1}, giving us the desired conclusion.
\end{proof}

\subsection{Discrete potential mKdV equation coming from infinitesimal Bianchi cubes}
Now we prove that infinitesimal Bianchi cubes of Darboux transformations keeping the arc-length condition yields solutions $\vartheta :\Sigma \times \tilde \Sigma \to \mathbb{R}$, denoted by $\vartheta(n,k) = \vartheta_n^k$, to the discrete potential mKdV equation \cite{hirota_nonlinear_1977-1, hirota_discretization_1998}:
	\[
		\tan\left( \frac{\vartheta_{n+1}^{k+1} -\vartheta_n^k}{4} \right) = \frac{b_k + a_n}{b_k - a_n} \tan \left( \frac{\vartheta_n^{k+1} - \vartheta_{n+1}^k}{4} \right)
	\]
for some discrete functions $a_n : \Sigma \to \mathbb{R}$ and $b_k : \tilde\Sigma \to \mathbb{R}$.

Assume that $f, \hat{f}: \Sigma \times I \rightarrow \mathbb{C}$ are Darboux deformations (with parameter function $m$) of the Darboux pair $x, \hat{x} : (\Sigma,\tfrac{1}{\mu}) \rightarrow \mathbb{C}$ forming infinitesimal Bianchi cubes keeping the arc-length polarization.
Then on any edge $(ij)$, $(f_i, f_j)$ and $(\hat{f}_i, \hat{f}_j)$ are Darboux pairs with parameter $\mu_{ij}$ keeping the arc-length polarization.
Without loss of generality, we take $m \equiv 1$; hence, defining $\theta_i, \theta_j, \hat{\theta}_i, \hat{\theta}_j$ as tangential angles of $f_i, f_j, \hat{f}_i, \hat{f}_j$, respectively, we obtain a pair of semi-discrete potential mKdV equations
	\begin{equation}\label{eqn:semi1}
		\left(\frac{\theta_i + \theta_j}{2}\right)' = \frac{2}{\sqrt{\mu_{ij}}} \sin \left(\frac{\theta_j - \theta_i}{2}\right) \quad\text{and}\quad \left(\frac{\hat{\theta}_i + \hat{\theta}_j}{2}\right)' = \frac{2}{\sqrt{\mu_{ij}}} \sin \left(\frac{\hat{\theta}_j - \hat{\theta}_i}{2}\right).
	\end{equation}
Similarly, we have that $(f_i, \hat{f}_i)$ and $(f_j, \hat{f}_j)$ are Darboux pairs with parameter $\hat{\mu}$ keeping the arc-length polarization, and we obtain another pair of semi-discrete potential mKdV equations
	\begin{equation}\label{eqn:semi2}
		\left(\frac{\theta_i + \hat{\theta_i}}{2}\right)' = \frac{2}{\sqrt{\hat{\mu}}} \sin \left(\frac{\hat{\theta}_i - \theta_i}{2}\right) \quad\text{and}\quad \left(\frac{\theta_j + \hat{\theta}_j}{2}\right)' = \frac{2}{\sqrt{\hat{\mu}}} \sin \left(\frac{\hat{\theta}_j - \theta_j}{2}\right).
	\end{equation}

These equations \eqref{eqn:semi1} and \eqref{eqn:semi2} are well-known partial differential equations that define Bäcklund transformations of the smooth potential mKdV equation as seen in \cite[Equations (7), (8)]{wadati_backlund_1974}.
Using these equations, permutability of the transformation was obtained in \cite[Equation (9)]{wadati_backlund_1974} (see also \cite{bianchi_sulla_1892}):
	\begin{equation}\label{eqn:discretem}
		\tan\left( \frac{\hat{\theta}_j - \theta_i}{4} \right) = \frac{\sqrt{\hat{\mu}} + \sqrt{\mu_{ij}}}{\sqrt{\hat{\mu}} - \sqrt{\mu_{ij}}} \tan \left( \frac{\hat{\theta}_i - \theta_j}{4} \right),
	\end{equation}
which is the discrete potential mKdV equation.
Summarizing, we have:

\begin{theorem}\label{thm1}
	The infinitesimal Bianchi cubes of Darboux transformations keeping arc-length polarization in both the smooth and discrete directions yield 1-parameter families of solutions of the discrete potential mKdV equation.
\end{theorem}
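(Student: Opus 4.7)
The plan is to combine the setup laid out in the paragraph preceding the theorem with the classical Bianchi--Wadati permutability computation for the smooth potential mKdV equation. Concretely, I would fix an infinitesimal Bianchi cube: Darboux deformations $f, \hat{f} : \Sigma \times I \to \mathbb{C}$ of a Darboux pair $x, \hat{x} : (\Sigma, \tfrac{1}{\mu}) \to \mathbb{C}$ of arc-length polarized discrete curves, with common parameter function $m$ (which we may take to be $m \equiv 1$), and with $\hat{\mu}$ the parameter of the $\hat{}$-direction Darboux transformation. The goal is to extract from this data a function $\vartheta: \Sigma \times \tilde{\Sigma} \to \mathbb{R}$ at each fixed $s \in I$ satisfying the stated discrete pmKdV equation.

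Next, on each edge $(ij)$ of $\Sigma$, Proposition~\ref{PermOfDarb} says that $(f_i, f_j)$ and $(\hat{f}_i, \hat{f}_j)$ are Darboux pairs with parameter $\mu_{ij}$ that preserve arc-length polarization; applying Fact~\ref{fact:ddsemi} to each of these pairs produces the two semi-discrete pmKdV equations~\eqref{eqn:semi1} in the tangential angles $\theta_i, \theta_j, \hat{\theta}_i, \hat{\theta}_j$. Swapping roles, the pairs $(f_i, \hat{f}_i)$ and $(f_j, \hat{f}_j)$ are also arc-length-preserving Darboux pairs, now with parameter $\hat{\mu}$, so Fact~\ref{fact:ddsemi} again yields the pair~\eqref{eqn:semi2}. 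Thus the four tangential angles at every $s \in I$ satisfy the full system of four Bäcklund equations for the smooth potential mKdV equation appearing in \cite[(7),(8)]{wadati_backlund_1974}.

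The remaining step is purely algebraic: eliminate the $s$-derivatives from the four equations in \eqref{eqn:semi1}--\eqref{eqn:semi2} to obtain a relation among the four angles alone. This is precisely the Bianchi permutability calculation carried out in \cite[(9)]{wadati_backlund_1974} (see also \cite{bianchi_sulla_1892, rogers_backlund_2002}), which produces exactly \eqref{eqn:discretem}. Re-indexing the discrete variables as $i \leftrightarrow n$ and $\hat{}\, \leftrightarrow k{+}1$, and setting $a_n := \sqrt{\mu_{n,n+1}}$ and $b_k := \sqrt{\hat{\mu}_k}$, rewrites \eqref{eqn:discretem} in the form stated in the theorem; letting the smooth parameter $s \in I$ vary then furnishes a $1$-parameter family of such solutions $\vartheta$.

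The main obstacle is essentially absent here, since all the genuine content has been pushed into Proposition~\ref{PermOfDarb} and Fact~\ref{fact:ddsemi}: the Bianchi--Wadati elimination is by now standard trigonometric algebra. The one subtlety worth spelling out is that the resulting discrete identity must hold at \emph{every} $s \in I$, not merely at the initial $s_0$ where $f(s_0)=x$, $\hat{f}(s_0)=\hat{x}$; this is automatic because at each $s$ the four angles $\theta_i(s), \theta_j(s), \hat{\theta}_i(s), \hat{\theta}_j(s)$ come from arc-length-preserving Darboux pairs, so the four semi-discrete pmKdV equations hold identically in $s$, and the elimination yielding \eqref{eqn:discretem} is pointwise in $s$.
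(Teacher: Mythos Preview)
Your proposal is correct and follows essentially the same route as the paper: set up the arc-length-preserving infinitesimal Bianchi cube, invoke Fact~\ref{fact:ddsemi} on the four smooth Darboux pairs to obtain the four semi-discrete potential mKdV equations \eqref{eqn:semi1}--\eqref{eqn:semi2}, and then cite the classical Bianchi--Wadati elimination to arrive at \eqref{eqn:discretem}. Your added remarks on the re-indexing and on the identity holding for every $s\in I$ are consistent elaborations of what the paper leaves implicit.
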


\begin{remark}
	We remark here that obtaining \eqref{eqn:discretem} from \eqref{eqn:semi1} and \eqref{eqn:semi2} is a well-known calculation, which we do not reproduce here.
	The original calculation by Bianchi can be found in \cite{bianchi_sulla_1892}, where Equations (9) and (9*) correspond to \eqref{eqn:semi1} and \eqref{eqn:semi2}, and Equation (11) corresponds to \eqref{eqn:discretem}.
	Another detailed (modern) calculation can be found in \cite[\S 1.3.1]{rogers_backlund_2002}.
\end{remark}

Finally, we make connection to previous work \cite[\S 2.2]{matsuura_discrete_2012-1} that considered a discrete motion of discrete curves whose compatibility condition results in the discrete potential mKdV equation.
As in \cite{matsuura_discrete_2012-1}, let $x:\Sigma \times \tilde\Sigma \to \mathbb{C}$, denoted by $x(n, k) = x_n^k$, be a discrete motion of a discrete planar curve such that, for any $(n, k) \in  \Sigma \times \tilde\Sigma$,
	\begin{equation}\label{eqn:discreteMotion}
		|x^k_{n+1}-x^k_n|=:a_n \quad\text{and}\quad |x^{k+1}_n-x^k_n|=:b_k
	\end{equation}
are constant in $k$ and $n$, respectively.
 
Without loss of generality, assume $x^k_n = 0$, $x^k_{n+1} = a_n$ and $x^{k+1}_n = b_k e^{i \theta}$ for some $\theta \in \mathbb{R}$.
Excluding the solution $a_n + b_k e^{i \theta}$ obtained via translation, $x^{k+1}_{n+1}$ is uniquely determined with cross-ratio satisfying
	\[
		\cratio{(x^k_n, x^k_{n+1}, x^{k+1}_{n+1}, x^{k+1}_n)}=\frac{{a_n}^2}{{b_k}^2}.
	\]
Prescribing the discrete arc-length polarization on $x_n^k, x_n^{k+1} : (\Sigma, \tfrac{1}{a_n^2}) \to \mathbb{C}$, we see that the two discrete polarized curves $x_n^k, x_n^{k+1}$ are a Darboux pair with parameter $b_k^2$.
Hence, the discrete motion considered in \cite{matsuura_discrete_2012-1} is successive Darboux transformations of discrete polarized curves keeping the arc-length polarization.

\begin{remark}
	As a final remark, we note that the Darboux transformations of discrete polarized curves keeping the arc-length condition requires the defining cross-ratios \eqref{eqn:dFlow} to be positive, resulting in non-embedded quadrilaterals.
	This phenomenon is also observed in consideration of Darboux transformations of smooth polarized curves keeping the arc-length condition, as now the tangential cross-ratios are positive.
	This non-embeddedness seems to be a result of the freedom created by the discrete or semi-discrete systems, raising the question of what the continuum limit of these situations might correspond to in the smooth case, and what the possible relations are to the smooth motions of smooth curves resulting in smooth potential mKdV equations studied in works such as \cite{lamb_solitons_1976, goldstein_korteweg-vries_1991}.
\end{remark}

\vspace{15pt}
\textbf{Acknowledgements.}
The authors would like to thank Professor Udo Hertrich-Jeromin and the referees for valuable comments.
The authors gratefully acknowledge the support from the JSPS/FWF Bilateral Joint Project I3809-N32 ``Geometric shape generation" and JSPS Grants-in-Aid for: JSPS Fellows 19J10679, Scientific Research (C) 15K04845, (C) 20K03585 and (S) 17H06127 (P.I.: M.-H.\ Saito).

\begin{bibdiv}
\begin{biblist}

%
\bib{bianchi_sulla_1892}{article}{
      author={Bianchi, Luigi},
       title={Sulla trasformazione di {{B\"acklund}} per le superficie
  pseudosferiche},
        date={1892},
     journal={Rend. Lincei},
      volume={5},
      number={1},
       pages={3\ndash 12},
}
%
%
%
\bib{bobenko_discrete_1996-1}{article}{
      author={Bobenko, Alexander~I.},
      author={Pinkall, Ulrich},
       title={Discrete isothermic surfaces},
        date={1996},
     journal={J. Reine Angew. Math.},
      volume={475},
       pages={187\ndash 208},
      review={\MR{1396732}},
  doi = {10.1515/crll.1996.475.187},
}

\bib{burstall_semi-discrete_2016}{article}{
      author={Burstall, Francis~E.},
      author={{Hertrich-Jeromin}, Udo},
      author={M{\"u}ller, Christian},
      author={Rossman, Wayne},
       title={Semi-discrete isothermic surfaces},
        date={2016},
     journal={Geom. Dedicata},
      volume={183},
       pages={43\ndash 58},
      review={\MR{3523116}},
        doi = {10.1007/s10711-016-0143-7},
}

%
%

\bib{cho_infinitesimal_2020}{article}{
      author={Cho, Joseph},
      author={Rossman, Wayne},
      author={Seno, Tomoya},
       title={Infinitesimal Darboux transformation and semi-discrete mKdV equation},
        date={2020},
             journal={To appear on Nonlinearity},
      eprint={2010.07846},
      url = {http://arxiv.org/abs/2010.07846},
}

%
%
%
%
\bib{goldstein_korteweg-vries_1991}{article}{
      author={Goldstein, Raymond~E.},
      author={Petrich, Dean~M.},
       title={The {{Korteweg}}-de {{Vries}} hierarchy as dynamics of closed
  curves in the plane},
        date={1991},
     journal={Phys. Rev. Lett.},
      volume={67},
      number={23},
       pages={3203\ndash 3206},
      review={\MR{1135964}},
  doi = {10.1103/PhysRevLett.67.3203},
}
%
%
%
\bib{hirota_nonlinear_1977-1}{article}{
      author={Hirota, Ryogo},
       title={Nonlinear partial difference equations. {{I}}. {{A}} difference
  analogue of the {{Korteweg}}-de {{Vries}} equation},
    language={en},
        date={1977-10},
     journal={J. Phys. Soc. Jpn.},
      volume={43},
      number={4},
       pages={1424\ndash 1433},
             review={\MR{0460934}},
       doi = {10.1143/JPSJ.43.1424},             
}
%
%
\bib{hirota_discretization_1998}{article}{
      author={Hirota, Ryogo},
       title={Discretization of the potential modified {{KdV}} equation},
        date={1998},
     journal={J. Phys. Soc. Japan},
      volume={67},
      number={7},
       pages={2234\ndash 2236},
      review={\MR{1647153}},
        doi = {10.1143/JPSJ.67.2234},
}
\bib{inoguchi_explicit_2012}{article}{
      author={Inoguchi, Jun-ichi},
      author={Kajiwara, Kenji},
      author={Matsuura, Nozomu},
      author={Ohta, Yasuhiro},
       title={Explicit solutions to the semi-discrete modified {{KdV}} equation
  and motion of discrete plane curves},
        date={2012},
     journal={J. Phys. A},
      volume={45},
      number={4},
       pages={045206, 16},
      review={\MR{2874242}},
  doi = {10.1088/1751-8113/45/4/045206},
}
\bib{inoguchi_motion_2012}{article}{
      author={Inoguchi, Jun-ichi},
      author={Kajiwara, Kenji},
      author={Matsuura, Nozomu},
      author={Ohta, Yasuhiro},
       title={Motion and {{B\"acklund}} transformations of discrete plane
  curves},
        date={2012},
     journal={Kyushu J. Math.},
      volume={66},
      number={2},
       pages={303\ndash 324},
      review={\MR{3051339}},
  doi = {10.2206/kyushujm.66.303},
}
%
%
\bib{kaji_linkage_2019}{incollection}{
      author={Kaji, Shizuo},
      author={Kajiwara, Kenji},
      author={Park, Hyeongki},
       title={Linkage mechanisms governed by integrable deformations of
  discrete space curves},
        date={2019},
        book={
   title={Nonlinear {{Systems}} and {{Their Remarkable Mathematical
  Structures}}},
      editor={Euler, Norbert},
      editor={Nucci, Maria~Clara},
      volume={2},
   publisher={{Chapman and Hall/CRC}},
     address={{New York}},}
       pages={356\ndash 381},
         doi = {10.1201/9780429263743},
}

\bib{lamb_solitons_1976}{article}{
      author={Lamb, G.~L., Jr.},
       title={Solitons and the motion of helical curves},
        date={1976},
     journal={Phys. Rev. Lett.},
      volume={37},
      number={5},
       pages={235\ndash 237},
      review={\MR{473584}},
        doi = {10.1103/PhysRevLett.37.235},
}
%
\bib{matsuura_discrete_2012-1}{article}{
      author={Matsuura, Nozomu},
       title={Discrete {{KdV}} and discrete modified {{KdV}} equations arising
  from motions of planar discrete curves},
        date={2012},
     journal={Int. Math. Res. Not. IMRN},
      number={8},
       pages={1681\ndash 1698},
      review={\MR{2920827}},
        doi = {10.1093/imrn/rnr080},
}

\bib{muller_semi-discrete_2013}{article}{
      author={M{\"u}ller, Christian},
      author={Wallner, Johannes},
       title={Semi-discrete isothermic surfaces},
        date={2013},
     journal={Results Math.},
      volume={63},
      number={3-4},
       pages={1395\ndash 1407},
      review={\MR{3057376}},
        doi = {10.1007/s00025-012-0292-4},
}
%

\bib{rogers_backlund_2002}{book}{
      author={Rogers, C.},
      author={Schief, W.~K.},
       title={Bäcklund and {Darboux} transformations},
      series={{Cambridge Texts in Applied Mathematics}},
   publisher={{Cambridge University Press}},
     address={{Cambridge}},
        date={2002},
      review={\MR{1908706}},
}

%

\bib{wadati_backlund_1974}{article}{
      author={Wadati, Miki},
       title={B\"acklund transformation for solutions of the modified
  {{Korteweg}}-de {{Vries}} equation},
        date={1974},
     journal={J. Phys. Soc. Japan},
      volume={36},
      number={5},
       pages={1498},
         doi = {10.1143/JPSJ.36.1498},
}


\end{biblist}
\end{bibdiv}

\end{document}